\documentclass[12pt]{article}
\usepackage[centertags]{amsmath}
\usepackage{amsfonts}
\usepackage{color,graphicx,shortvrb}
\usepackage{amssymb}
\usepackage{amsthm}
\usepackage{psfrag}
\usepackage{empheq}
\usepackage{newlfont}
\usepackage{epstopdf}
\usepackage{subfigure}


\newtheorem{exe}{Example}

\newtheorem{theo}{Theorem }
\newtheorem{req}{Remark }
\newtheorem{lemma}{Lemma }

\hfuzz2pt
\topmargin=-0.8truecm \textwidth=6 truein\textheight=9truein
\oddsidemargin=0.3truecm  \flushbottom

\newtheorem{defi}{Definition}


\begin{document}
\normalsize

\title{A Novel Finite Time Stability Analysis of Nonlinear Fractional-Order Time Delay Systems: A Fixed Point Approach}
\author{\ Abdellatif Ben Makhlouf \thanks{Mathematics Department, College of Science, Jouf University, P.O. Box: 2014, Sakaka, Saudi Arabia and Department of Mathematics, Faculty of Sciences of Sfax,  Tunisia,\  E.mail: \texttt{benmakhloufabdellatif@gmail.com}}
}
\date {}

\maketitle
\begin{abstract}
In this article, a novel Finite Time Stability (FTS) scheme for Fractional-Order Time Delayed Systems (FOTDSs) is proposed. By exploiting the fixed point approach, sufficient conditions that guarantee the robust FTS of FOTDSs have been established. Finally, two illustrative examples are presented to validate the main result.
\end{abstract}

{\bf Keywords:} Nonlinear Systems, delay,  fixed-point theory, finite time stability.\\

\section{Introduction}\noindent
Fractional-Order Systems (FOS) can be defined as nonlinear systems that are modeled by Fractional Differential Equations (FDEs), carried out with non-integer derivatives. Indeed, such system dynamics \cite{1}. Indeed, such system dynamics are described by fractional derivatives. Integrals and derivatives of fractional orders are used to demonstrate objects that can be described by power-law long-range dependence or power-law nonlocality \cite{2} or fractal properties.

Note that, the Fractional Calculus (FC) has been involved in studying and analyzing the system dynamics in many  fields like physics, electrochemistry, biology, viscoelasticity, economics, plasma turbulence models, heat conduction and chaotic systems \cite{eco1,eco3,1,eco2}. In a same context, the evolution of engineering and sciences has notably refreshing the use of the FC in numerous areas of the control theory, and this includes FTS \cite{Chen5,Lazarevic22,Naifaro,Phat,Phat3}, asymptotic stability \cite{5,3,4,6}, stabilization \cite{10}, observer design \cite{11,10} and fault estimation \cite{14,12,13}.\\
The demonstration of FTS of FOTDSs in the literature has been based on different methods and concepts such us the Gronwall inequalities \cite{ben1,Chen5,Chen2,Chen1,Lazarevic22,Naifaro,Phat,Chen4,Chen3} and the Lyapunov functions \cite{Phat1,Phat2,Phat3}. As for the first group using the Gronwall inequalities, the authors in \cite{ben1}  have presented a FTS analysis of the FOTDSs using the Caputo-Katugampola derivative. Furthermore, based on the generalized Gronwall inequality, Naifar et al. in \cite{Naifaro} have described a FTS result of the FOTDSs using the Caputo Fractional Derivative (CFD). Authors in \cite{Chen2} have presented a robust FTS approach of FOTDSs. On the other hand, dealing with the second group using the Lyapunov functions, Thanh et al. in \cite{Phat3} have investigated a novel FTS analysis of FOTDSs.\\
Our work presents a novel methodology to study the FTS of FOTDSs using the fixed point approach. We will exploite a fixed point theorem in order to study finite time stability for FOTDSs. The theoretical findings are confirmed and validated by two illustrative examples.
\section{Basic results}\noindent
In this part, some theorems, lemmas and definitions are given.\\
\begin{defi}\cite{pod}
Given $0<\sigma<1$. The CFD is defined as,
\begin{equation}\label{dercap}
^CD_{r}^{\sigma} \zeta(s)=\frac{1}{\Gamma(1-\sigma)} \frac{d}{ds}\displaystyle \int_{r}^s  (s-\lambda)^{-\sigma} \big(\zeta(\lambda)-\zeta(r)\big)d\lambda.
\end{equation}
\end{defi}
\begin{defi}\cite{pod}
The Mittag-Leffler Function (MLF) is given by the following expression:
$$E_{\sigma}(t)=\displaystyle\sum_{b\geq 0} \frac{t^b}{\Gamma(b \sigma +1)},$$
where $\sigma>0$, $t\in \mathbb{C}$.\\
\end{defi}
\begin{req}
$E_{\sigma}(t)$ is an increasing function on $\mathbb{R}_+$.
\end{req}
\begin{req}
The function $\psi(s)=E_{\sigma}\big(\theta (s-r)^\sigma\big)$ satisfies
$^CD_{r}^{\sigma}\psi(s)=\theta \psi(s),$ and
$\frac{1}{\Gamma(\sigma)} \int_{r}^s \;(s-\lambda)^{\sigma-1} \psi(\lambda)  d\lambda=\frac{1}{\theta}\big( \psi(s)-1\big)$, where $\theta\in \mathbb{R}^*$.
\end{req}
\begin{req}
Note that, in order to prove the existence of the global solutions of FDEs and FOTDSs, authors in \cite{Cong,Doan2019} used MLF. 
\end{req}

\begin{theo}\cite{t}(Generalized Taylor's formulat)\label{t}
Let $0<\sigma< 1$. Assume that ${}^CD_{r_1}^{m \sigma} \vartheta \in C\big([r_1,r_2] \big)$, for each $m\in \{0,1,...,s\}$, with $s\in \mathbb{N}^*$, then we have
$$\vartheta(x)=\displaystyle \sum_{m=0}^{s-1}{}^CD_{r_1}^{m\sigma} \vartheta(r_1) \frac{(x-r_1)^{m \sigma}}{\Gamma(m \sigma+1)}+{}^CD_{r_1}^{s\sigma} \vartheta(c) \frac{(x-r_1)^{s \sigma}}{\Gamma(s \sigma+1)},$$
 with $c\in [r_1,x]$, for each $x\in (r_1,r_2]$.
\end{theo}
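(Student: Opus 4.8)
The plan is to reduce the statement to the fundamental theorem of fractional calculus, combined with a telescoping argument and a mean value theorem for the remainder. First I would introduce the Riemann--Liouville fractional integral $I_{r_1}^{\sigma}f(x)=\frac{1}{\Gamma(\sigma)}\int_{r_1}^{x}(x-\lambda)^{\sigma-1}f(\lambda)\,d\lambda$ and record three standard facts valid under the stated continuity hypotheses: the semigroup law $I_{r_1}^{\alpha}I_{r_1}^{\beta}=I_{r_1}^{\alpha+\beta}$, the Caputo inversion formula $I_{r_1}^{\sigma}\,{}^CD_{r_1}^{\sigma}g(x)=g(x)-g(r_1)$ for $0<\sigma<1$, and the elementary evaluation $I_{r_1}^{m\sigma}[1](x)=\frac{(x-r_1)^{m\sigma}}{\Gamma(m\sigma+1)}$.

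The key step is a telescoping identity. Writing $D^{m\sigma}={}^CD_{r_1}^{m\sigma}$ for the sequential derivatives and applying the inversion formula to $g=D^{m\sigma}\vartheta$, then composing with $I_{r_1}^{m\sigma}$ through the semigroup law, I would derive
$$I_{r_1}^{m\sigma}D^{m\sigma}\vartheta(x)-I_{r_1}^{(m+1)\sigma}D^{(m+1)\sigma}\vartheta(x)=D^{m\sigma}\vartheta(r_1)\,\frac{(x-r_1)^{m\sigma}}{\Gamma(m\sigma+1)}.$$
Summing this relation over $m=0,\dots,s-1$, the left-hand side telescopes to $\vartheta(x)-I_{r_1}^{s\sigma}D^{s\sigma}\vartheta(x)$, which yields the integral form of Taylor's formula
$$\vartheta(x)=\sum_{m=0}^{s-1}{}^CD_{r_1}^{m\sigma}\vartheta(r_1)\,\frac{(x-r_1)^{m\sigma}}{\Gamma(m\sigma+1)}+I_{r_1}^{s\sigma}\,{}^CD_{r_1}^{s\sigma}\vartheta(x).$$

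To convert the remainder into the stated pointwise form, I would invoke the integral mean value theorem: since ${}^CD_{r_1}^{s\sigma}\vartheta$ is continuous on $[r_1,r_2]$ and the kernel $(x-\lambda)^{s\sigma-1}$ keeps a constant sign on $(r_1,x)$, there exists $c\in[r_1,x]$ with $I_{r_1}^{s\sigma}\,{}^CD_{r_1}^{s\sigma}\vartheta(x)={}^CD_{r_1}^{s\sigma}\vartheta(c)\,I_{r_1}^{s\sigma}[1](x)={}^CD_{r_1}^{s\sigma}\vartheta(c)\,\frac{(x-r_1)^{s\sigma}}{\Gamma(s\sigma+1)}$, which is exactly the claimed remainder term.

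The main obstacle I anticipate is justifying the inversion formula $I_{r_1}^{\sigma}\,{}^CD_{r_1}^{\sigma}g=g-g(r_1)$ at each level of the iteration, since it must be applied to $g=D^{m\sigma}\vartheta$ rather than to $\vartheta$ itself; this is precisely where the hypothesis ${}^CD_{r_1}^{m\sigma}\vartheta\in C\big([r_1,r_2]\big)$ for every $m\le s$ enters, both to guarantee that the successive compositions of integrals and derivatives are well defined and interchangeable and to make the semigroup property applicable. Once these compositional facts are in place, the remainder of the argument is routine bookkeeping of the telescoping sum followed by a standard mean value estimate.
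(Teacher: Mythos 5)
The paper offers no proof of this statement; it is imported by citation from \cite{t} (Odibat and Shawagfeh). Your proposal correctly reconstructs the argument given there, following essentially the same route: the one-step identity $I_{r_1}^{m\sigma}D^{m\sigma}\vartheta-I_{r_1}^{(m+1)\sigma}D^{(m+1)\sigma}\vartheta=D^{m\sigma}\vartheta(r_1)\,\frac{(x-r_1)^{m\sigma}}{\Gamma(m\sigma+1)}$ obtained from the Caputo inversion formula and the semigroup property, the telescoping sum, and the integral mean value theorem applied to the remainder $I_{r_1}^{s\sigma}D^{s\sigma}\vartheta(x)$ (your explicit remark that $D^{m\sigma}$ must be read as the $m$-fold sequential derivative $D^{\sigma}\circ\cdots\circ D^{\sigma}$, not the Caputo derivative of order $m\sigma$, is exactly the point on which the telescoping hinges).
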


\begin{lemma}\label{t1}
For $0<\sigma< 1$ and $\theta>0$, we have
$$\frac{t^\sigma}{E_{\sigma}\big(\theta t^\sigma\big)}\leq \frac{\Gamma( \sigma+1)}{\theta},\ \forall t\geq0.$$
\end{lemma}
\begin{proof}
Using Theorem \ref{t} for the function $\vartheta(t)=E_{\sigma}\big(\theta t^\sigma\big)$, we get
$$\vartheta(t)=1+\frac{\theta t^\sigma}{\Gamma( \sigma+1)}+\frac{\theta^2 t^{2\sigma} \vartheta(c)}{\Gamma(2 \sigma+1)},$$
with $c\in [0,t]$, for each $t>0$.\\
Therefore
$$\vartheta(t)\geq 1+\frac{\theta t^\sigma}{\Gamma( \sigma+1)},$$
for all $t\geq 0$.\\
Then
$$1\geq \frac{1}{\vartheta(t)}+\frac{\theta t^\sigma}{\vartheta(t) \Gamma( \sigma+1)},$$
for all $t\geq 0$.\\
Hence
$$\frac{t^\sigma}{E_{\sigma}\big(\theta t^\sigma\big)}\leq \frac{\Gamma( \sigma+1)}{\theta},$$
for all $t\geq 0$.\\
\end{proof}

\begin{defi}
	A mapping $\varpi:\mathbf{Y}\times \mathbf{Y}\rightarrow [0,\infty]$ is called a generalized metric on a nonempty set $\mathbf{Y}$ if:
	\begin{description}
		\item[K1] $\varpi(\gamma_1,\gamma_2)=0$ if and only if $\gamma_1=\gamma_2$;
		\item[K2] $\varpi(\gamma_1,\gamma_2)=\varpi(\gamma_2,\gamma_1)$ for all $\gamma_1,\gamma_2\in Y$;
		\item[K3] $\varpi(\gamma_1,\gamma_3)\leq \varpi(\gamma_1,\gamma_2)+\varpi(\gamma_2,\gamma_3)$ for all $\gamma_1,\gamma_2,\gamma_3\in Y$.
	\end{description}
\end{defi}
The below theorem describes a basic result of the fixed point theory.
\begin{theo}\label{FXPthem1}\cite{Diaz1968}
Suppose that $(\mathbf{Y},\varpi)$ is a generalized complete metric space. Let $\Psi: \mathbf{Y}\rightarrow \mathbf{Y}$ is a strictly contractive operator with $C<1$. If one can find a nonnegative integer $j_0$ such that $\varpi(\Psi^{j_0+1}y_0,\Psi^{j_0}y_0)<\infty$ for some $y_0\in \mathbf{Y}$, then:
	\begin{description}
		\item[(a)] $\Psi^n y_0$ converges to a fixed point $y_1$ of $\Psi$;
		\item[(b)] $y_1$ is the unique fixed point of $\Psi$ in $\mathbf{Y}^*:=\{y_2\in \mathbf{Y}: \varpi(\Psi^{j_0} y_0,y_2)<\infty\}$;
		\item[(c)] If $y_2\in \mathbf{Y}^*$, then $\varpi(y_2,y_1)\leq \frac{1}{1-C} \varpi(\Psi y_2,y_2)$.
	\end{description}
\end{theo}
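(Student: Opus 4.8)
The plan is to reduce everything to a standard Banach-type iteration after neutralizing the difficulty caused by the fact that $\varpi$ may assume the value $+\infty$. The crucial device is to restart the iteration at $x_0 := \Psi^{j_0} y_0$, for which the hypothesis guarantees that $\varpi(\Psi x_0, x_0) = \varpi(\Psi^{j_0+1} y_0, \Psi^{j_0} y_0) =: d_0 < \infty$. Setting $x_n := \Psi^n x_0 = \Psi^{n+j_0} y_0$, strict contractivity gives $\varpi(x_{n+1}, x_n) \le C\, \varpi(x_n, x_{n-1})$, and iterating yields $\varpi(x_{n+1}, x_n) \le C^n d_0$. Summing via the triangle inequality K3 over a telescoping range produces, for $m > n$, the bound $\varpi(x_m, x_n) \le \frac{C^n}{1-C}\, d_0$, which tends to $0$. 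Hence $\{x_n\}$ is a Cauchy sequence and, by completeness of $(\mathbf{Y},\varpi)$, converges to some $y_1 \in \mathbf{Y}$; since $\Psi^n y_0$ coincides with $x_{n-j_0}$ for $n \ge j_0$, this already gives the convergence claimed in part (a).

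Next I would verify that the limit is a fixed point. Using K3 together with contractivity, I estimate $\varpi(\Psi y_1, y_1) \le \varpi(\Psi y_1, \Psi x_n) + \varpi(x_{n+1}, x_n) + \varpi(x_n, y_1) \le C\,\varpi(y_1, x_n) + C^n d_0 + \varpi(x_n, y_1)$, and each of the three terms on the right tends to $0$ as $n \to \infty$. Therefore $\varpi(\Psi y_1, y_1) = 0$, and K1 forces $\Psi y_1 = y_1$, which completes part (a).

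For part (b) I would first record that $y_1 \in \mathbf{Y}^*$: from the Cauchy estimate with $n=0$ one has $\varpi(x_0, x_m) \le \frac{d_0}{1-C}$, so $\varpi(x_0, y_1) \le \varpi(x_0, x_m) + \varpi(x_m, y_1) \le \frac{d_0}{1-C} + \varpi(x_m, y_1) < \infty$ upon letting $m \to \infty$. Now let $y_2 \in \mathbf{Y}^*$ be any fixed point. Since both $y_1$ and $y_2$ lie at finite distance from $x_0 = \Psi^{j_0} y_0$, K3 gives $\varpi(y_1, y_2) \le \varpi(y_1, x_0) + \varpi(x_0, y_2) < \infty$. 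Contractivity then yields $\varpi(y_1, y_2) = \varpi(\Psi y_1, \Psi y_2) \le C\, \varpi(y_1, y_2)$; because $C < 1$ and this distance is finite, it must vanish, so $y_1 = y_2$, establishing the uniqueness in $\mathbf{Y}^*$.

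Finally, for part (c), take $y_2 \in \mathbf{Y}^*$. If $\varpi(\Psi y_2, y_2) = \infty$ the asserted inequality is vacuous, so I may assume it is finite. Membership of $y_1$ and $y_2$ in $\mathbf{Y}^*$ again yields $\varpi(y_2, y_1) < \infty$. Using K3 together with $y_1 = \Psi y_1$ and contractivity, $\varpi(y_2, y_1) \le \varpi(y_2, \Psi y_2) + \varpi(\Psi y_2, \Psi y_1) \le \varpi(\Psi y_2, y_2) + C\, \varpi(y_2, y_1)$, and rearranging the finite quantities gives $(1-C)\,\varpi(y_2, y_1) \le \varpi(\Psi y_2, y_2)$, which is precisely the claimed estimate. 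The main obstacle throughout is the bookkeeping of the possibly infinite values of $\varpi$: every cancellation of a $\varpi$-term across an inequality and every division by $1-C$ is legitimate only after the relevant distance has been shown to be finite, which is exactly the role played by restarting the orbit at $x_0$ and by the definition of the set $\mathbf{Y}^*$.
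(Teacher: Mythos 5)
Your proof is correct, and it is the standard Diaz--Margolis argument: restart the orbit at $\Psi^{j_0}y_0$, obtain the geometric Cauchy estimate, pass to the limit, and handle uniqueness and the a posteriori bound only after securing finiteness of the relevant distances. Note that the paper itself offers no proof of this statement --- it is quoted verbatim from the cited reference \cite{Diaz1968} --- so there is nothing to compare against; your careful bookkeeping of the possibly infinite values of $\varpi$ is exactly what the generalized-metric setting requires.
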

\par
A class of FOS with time delay is considered as follows:
\begin{equation}\label{e2}
^CD_{t_0}^{\beta} x(t)=A_0 x(t)+A_1 x(t-g(t))+A_2d(t)+f(t,x(t),x(t-g(t)),d(t)),\ t\geq t_0,
\end{equation}
with the initial condition $x(s)=\nu(s),$ $t_0-g\leq s \leq t_0$, with $0<\beta<1$, $g(t)$ is continuous, $0\leq g(t)\leq g$, $d(t)\in \mathbb{R}^p$ is the disturbance, $\nu\in C\big([t_0-g,t_0],\mathbb{R}^n\big)$, $A_0\in \mathbb{R}^{n\times n}$, $A_1\in \mathbb{R}^{n\times n}$, $A_2\in \mathbb{R}^{n\times p}$.\\
The function $f$ is continuous and satisfies:
\begin{equation}\label{H1}
\|f(s,u_1,u_2,u_3)-f(s,w_1,w_2,w_3)\|\leq \kappa(s) \big(\|u_1-w_1\|+\|u_2-w_2\|+\|u_3-w_3\|  \big),
\end{equation}
 and $f(s,0,0,0)=0$, for all $(s,u_1,u_2,u_3,w_1,w_2,w_3)\in \mathbb{R}_+\times\mathbb{R}^n \times \mathbb{R}^n \times\mathbb{R}^p\times\mathbb{R}^n\times\mathbb{R}^n\times\mathbb{R}^p$ where $\kappa$ is a continuous function.\\
The function $d$ is continuous and satisfies:
\begin{equation}\label{H2}
\exists \rho >0: \ \ \ d^T(t) d(t)\leq \rho^2.
\end{equation}
\begin{defi}\cite{Phat}
The FOS (\ref{e2}) is robustly FTS with respect to $\{\varepsilon_1, \varepsilon_2, \rho,T\}$, $\varepsilon_1 < \varepsilon_2$ if
$$\|\nu\|\leq \varepsilon_1$$
imply:
$$\|x(t)\|\leq \varepsilon_2,\ \forall t \in [t_0,T],$$
for all $d$ satisfying (\ref{H2}).
\end{defi}

\section{Stability analysis}
Let us denote $a_i=\displaystyle\max_{s\in [t_0,T]}\Big(\|A_i\|+\kappa(s)\Big)$ for $i=0,1,2$.
\begin{theo}\label{th1}
The FOS (\ref{e2}) is robustly FTS w.r.t. $\{\varepsilon_1, \varepsilon_2, \rho,T\}$, $\varepsilon_1 < \varepsilon_2$ if there exists $\eta>0$ such that
\begin{equation}\label{c1}
C(\varepsilon_1,\rho)=\Big(r_1 E_{\beta}\big((a_0+a_1+\eta)(T-t_0)^{\beta}\big)+1\Big)\varepsilon_1+r_2 E_{\beta}\big((a_0+a_1+\eta)(T-t_0)^{\beta}\big)\rho \leq \varepsilon_2,
\end{equation}
where $r_1=\displaystyle \frac{M(a_0+a_1+\eta) (a_0+a_1)}{\eta\Gamma(\beta+1)}$, $r_2=\displaystyle \frac{M(a_0+a_1+\eta) a_2}{\eta\Gamma(\beta+1)}$,\\
$M=\displaystyle\sup_{s\in [t_0,T]}\Big( \frac{(s-t_0)^{\beta}}{E_{\beta}\Big((a_0+a_1+\eta)(s-t_0)^{\beta}\Big)} \Big)$.
\end{theo}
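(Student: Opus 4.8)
The plan is to reformulate the delayed fractional differential equation as a fixed-point problem on a suitable function space and then invoke the Diaz--Margolis fixed point theorem (Theorem \ref{FXPthem1}) with a carefully chosen generalized metric.

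First I would convert the FOS \eqref{e2} into its equivalent integral form. A solution $x$ on $[t_0,T]$ satisfies
$$x(t)=\nu(t_0)+\frac{1}{\Gamma(\beta)}\int_{t_0}^t (t-\lambda)^{\beta-1}\Big(A_0x(\lambda)+A_1x(\lambda-g(\lambda))+A_2d(\lambda)+f(\lambda,x(\lambda),x(\lambda-g(\lambda)),d(\lambda))\Big)d\lambda$$
for $t\in[t_0,T]$, together with $x(s)=\nu(s)$ on $[t_0-g,t_0]$. I would define the space $\mathbf{Y}$ of continuous functions on $[t_0-g,T]$ agreeing with $\nu$ on $[t_0-g,t_0]$, and let $\Psi$ be the operator sending $x$ to the right-hand side above. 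The key device is to equip $\mathbf{Y}$ with the weighted (Bielecki-type) generalized metric
$$\varpi(x,y)=\inf\Big\{c\geq 0:\ \|x(t)-y(t)\|\leq c\,\varepsilon\, E_{\beta}\big((a_0+a_1+\eta)(t-t_0)^{\beta}\big),\ \forall t\in[t_0,T]\Big\},$$
for an appropriate scaling $\varepsilon$; this weight is exactly what makes $\Psi$ contractive. Fixed points of $\Psi$ are precisely solutions of \eqref{e2}.

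Next I would show $\Psi$ is strictly contractive with constant $C=\frac{a_0+a_1}{a_0+a_1+\eta}<1$. Given $x,y$ with $\varpi(x,y)\leq c$, I estimate $\|\Psi x(t)-\Psi y(t)\|$ using the Lipschitz hypothesis \eqref{H1} and the bounds $\|A_i\|+\kappa\leq a_i$, which collapse the linear term and the nonlinearity into a single factor $(a_0+a_1)$ multiplying $\|x(\lambda)-y(\lambda)\|+\|x(\lambda-g(\lambda))-y(\lambda-g(\lambda))\|$. The crucial computation is the fractional integral of the Mittag-Leffler weight: by Remark~3 (with $\theta=a_0+a_1+\eta$),
$$\frac{1}{\Gamma(\beta)}\int_{t_0}^t (t-\lambda)^{\beta-1}E_{\beta}\big(\theta(\lambda-t_0)^{\beta}\big)d\lambda=\frac{1}{\theta}\Big(E_{\beta}\big(\theta(t-t_0)^{\beta}\big)-1\Big)\leq \frac{1}{\theta}E_{\beta}\big(\theta(t-t_0)^{\beta}\big).$$
This yields $\|\Psi x(t)-\Psi y(t)\|\leq c\,\varepsilon\,\frac{a_0+a_1}{a_0+a_1+\eta}E_{\beta}\big(\theta(t-t_0)^{\beta}\big)$, hence $\varpi(\Psi x,\Psi y)\leq C\,\varpi(x,y)$.

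Finally, Theorem \ref{FXPthem1} guarantees a unique fixed point $x$, i.e.\ the solution, and supplies the quantitative bound (c): taking $x_0\equiv \nu(t_0)$ (extended appropriately) and estimating $\varpi(\Psi x_0,x_0)$ gives $\varpi(x,x_0)\leq\frac{1}{1-C}\varpi(\Psi x_0,x_0)$. Unwinding the weighted metric, bounding $\|\nu(t_0)\|\leq\varepsilon_1$ and $\|d\|\leq\rho$, and applying Lemma \ref{t1} to control the factor $M=\sup_s (s-t_0)^{\beta}/E_{\beta}(\theta(s-t_0)^{\beta})$ produces the pointwise estimate $\|x(t)\|\leq C(\varepsilon_1,\rho)$ with the constants $r_1,r_2$ exactly as stated; the hypothesis \eqref{c1} then forces $\|x(t)\|\leq\varepsilon_2$ on $[t_0,T]$. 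I expect the main obstacle to be bookkeeping the delayed argument $x(\lambda-g(\lambda))$ correctly inside the weighted metric: one must verify that for $\lambda-g(\lambda)<t_0$ the difference $x-y$ vanishes (both equal $\nu$), so the delay term contributes only over the region where the weight is already controlled, and that the $\sup$ defining $M$ together with Lemma \ref{t1} is what converts the Mittag-Leffler weight back into the clean constants $r_1,r_2$ in \eqref{c1}.
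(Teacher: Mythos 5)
Your proposal is correct and follows essentially the same route as the paper: the same Mittag--Leffler--weighted generalized metric, the same contraction constant $\frac{a_0+a_1}{a_0+a_1+\eta}$ obtained via the fractional-integral identity for $E_\beta$, the same choice of the constant extension of $\nu(t_0)$ as the starting point, and the same use of part (c) of the Diaz--Margolis theorem to extract the bound $r_1\varepsilon_1+r_2\rho$. The only minor discrepancy is that Lemma \ref{t1} is not needed to obtain $r_1,r_2$ as stated (the factor $M$ enters directly from the estimate of $\varpi(\mathcal{V}y_0,y_0)$); the paper invokes that lemma only afterwards, in the remark relaxing condition \eqref{c1}.
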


\begin{proof}
Let $\nu\in C\big([t_0-g,t_0],\mathbb{R}^n\big)$ such that $\|\nu\|\leq \varepsilon_1$.\\
Define the metric $\varpi$ on $E=C\big([t_0-g,T],\mathbb{R}^n\big)$ by
$$\varpi(x_1,x_2)=\inf \Bigg\{b\in [0,\infty]: \frac{\|x_1(t)-x_2(t)\|}{h(t)} \leq b  , \forall t\in [t_0-g,T]\Bigg\},$$
where $h$ is defined by $h(s)=1$, for $s\in [t_0-g,t_0]$ and $h(s)=E_{\beta}\Big((a_0+a_1+\eta)(s-t_0)^{\beta}\Big)$ for $s\in [t_0,T]$.\\
As the author of \cite{J11} did in his theorem $3.1$, we get $(E,\varpi)$ is a generalized complete metric space.\\
Now, define the operator $\mathcal{V}:E\rightarrow E$ such that $(\mathcal{V}y)(t)=\nu(t)$, for $t\in [t_0-g,t_0]$ and

\begin{eqnarray}
	(\mathcal{V}y)(t)&=&\nu(t_0)+\frac{1}{\Gamma(\beta)}\int_{t_0}^{t} (t-s)^{\beta-1} \Big[A_0 y(s)+A_1 y(s-g(s))\notag\\
&+& A_2d(s)+f(s,y(s),y(s-g(s)),d(s))\Big] ds,
\end{eqnarray}
for $t\in [t_0,T]$.\\
Note that, for $y \in E$ we have $\mathcal{V}y \in E$.\\
It is easy to see that $\varpi(\mathcal{V} u_0,u_0)< \infty,$ and $\{u_1\in E: \varpi(u_0,u_1) < \infty\}=E\,\  \forall u_0\in E$.\\
Let $x_1,$ $x_2\in E$, we have $(\mathcal{V}x_1)(s)-(\mathcal{V}x_2)(s)=0$, for every $s\in [t_0-g,t_0]$.\\
For $t\in [t_0,T]$, we get
\begin{eqnarray}
	\Big\|(\mathcal{V} x_1)(t)-(\mathcal{V} x_2)(t)\Big\|&=&\bigg\| \int_{t_0}^t \frac{(t-l)^{\beta-1}}{\Gamma(\beta)} \Big[A_0 \big(x_1(l)-x_2(l)\big) +A_1 \big(x_1(l-g(l))-x_2(l-g(l))\big)\notag\\
&+& \Big(f\big(l,x_1(l),x_1(l-g(l)),d(l)\big)-f\big(l,x_2(l),x_2(l-g(l)),d(l)\big) \Big)     \Big]dl \bigg\|\notag\\
	&\leq& \int_{t_0}^t \; \frac{(t-l)^{\beta-1}}{\Gamma(\beta)} \Big[ \Big(\kappa(l) + \|A_0\|  \Big) \|x_1(l)-x_2(l)\|\notag\\
&+& \Big(\kappa(l) + \|A_1\|  \Big)
\|x_1(l-g(l))-x_2(l-g(l))\|\Big]dl \notag\\
	&\leq&  a_0 \int_{t_0}^t (t-l)^{\beta-1}\;  \frac{\| x_1(l)-x_2(l) \|}{\Gamma(\beta)} dl\notag\\
&+& a_1 \int_{t_0}^t (t-l)^{\beta-1}\;  \frac{\| x_1(l-g(l))-x_2(l-g(l)) \|}{\Gamma(\beta)} dl  \notag\\
&\leq&  \frac{a_0}{\Gamma(\beta)} \int_{t_0}^t (t-l)^{\beta-1}\;  \frac{\| x_1(l)-x_2(l) \|}{h(l)} h(l) dl\notag\\
&+& \frac{a_1}{\Gamma(\beta)} \int_{t_0}^t (t-l)^{\beta-1}\;  \frac{\| x_1(l-g(l))-x_2(l-g(l)) \|}{h(l-g(l))} h(l-g(l)) dl  \notag\\
&\leq&  \frac{a_0}{\Gamma(\beta)} \varpi(x_1,x_2) \int_{t_0}^t (t-l)^{\beta-1}\; h(l) dl \notag\\
&+& \frac{a_1}{\Gamma(\beta)}\varpi(x_1,x_2) \int_{t_0}^t (t-l)^{\beta-1}\; h(l-g(l)) dl.
\end{eqnarray}
Since $h$ is nondecreasing, so
\begin{eqnarray}
	\|(\mathcal{V} x_1)(t)-(\mathcal{V} x_2)(t)\|&\leq& \frac{(a_0+a_1)}{\Gamma(\beta)} \varpi(x_1,x_2) \int_{t_0}^t (t-s)^{\beta-1}\; h(s) ds \notag \\
&\leq& \frac{(a_0+a_1)}{(a_0+a_1+\eta)} h(t) \varpi(x_1,x_2), \,\, \text{for all}\,\, t\in [t_0,T].\notag
	\end{eqnarray}
Then
$$\varpi(\mathcal{V} x_1,\mathcal{V} x_2)\leq \frac{(a_0+a_1)}{(a_0+a_1+\eta)} \varpi(x_1,x_2).$$
Thus, $\mathcal{V}$ is a strictly contractive operator.\\
Now, consider the function $y_0$ defined by $y_0(s)=\nu(s)$, for $s\in [t_0-g,t_0]$ and $y_0(s)=\nu(t_0)$ for $s\in [t_0,T]$.\\
We have $(\mathcal{V}y_0)(s)-y_0(s)=0$, for every $s\in [t_0-g,t_0]$.\\
For $t\in [t_0,T]$, we get

\begin{eqnarray}
	\Big\|(\mathcal{V} y_0)(t)-y_0(t)\Big\|&=&\Big\|\frac{1}{\Gamma(\beta)}\int_{t_0}^{t} (t-\tau)^{\beta-1} [A_0 y_0(\tau)+A_1 y_0(\tau-g(\tau))\notag\\
&+& A_2d(\tau)+f\big(\tau,y_0(\tau),y_0(\tau-g(\tau)),d(\tau)\big)] d\tau \Big\| \notag\\
&\leq& \frac{1}{\Gamma(\beta)}\int_{t_0}^{t} (t-l)^{\beta-1} \big[\big(a_0+a_1 \big)\|\nu\|+a_2\rho \big]dl \notag\\
\notag\\
&\leq& \frac{\big[\big(a_0+a_1 \big)\|\nu\|+a_2\rho\big]}{\Gamma(\beta+1)}(t-t_0)^\beta. \notag\\
\end{eqnarray}
Therefore,
\begin{eqnarray}
	\frac{\Big\|(\mathcal{V} y_0)(t)-y_0(t)\Big\|}{h(t)}&\leq& M \frac{\big[\big(a_0+a_1 \big)\|\nu\|+a_2\rho\big]}{\Gamma(\beta+1)} ,\notag\\
\end{eqnarray}
then
$$\varpi(y_0,\mathcal{V} y_0)\leq  M \frac{\big[\big(a_0+a_1 \big)\|\nu\|+a_2\rho\big]}{\Gamma(\beta+1)}.$$
By using Theorem \ref{FXPthem1}, there exists a unique solution $x$ of \eqref{e2} with initial condition $\nu$ such that
\begin{eqnarray}
	\varpi(x,y_0)&\leq& \frac{(a_0+a_1+\eta)}{\eta} \frac{\big[\big(a_0+a_1 \big)\|\nu\|+a_2\rho\big]}{\Gamma(\beta+1)} M\notag\\
&\leq& r_1 \varepsilon_1 +r_2 \rho. \notag\\
\end{eqnarray}
Hence
$$\|x(t)-y_0(t)\| \leq  \big(r_1 \varepsilon_1 +r_2 \rho\big)  h(T), $$
for every $t\in [t_0,T].$\\
Then
\begin{eqnarray}
	\|x(t)\|&\leq& \|y_0(t)\|+ \|x(t)-y_0(t)\| ,\notag\\
&\leq& \Big(r_1 E_{\beta}\big((a_0+a_1+\eta)(T-t_0)^{\beta}\big)+1\Big)\varepsilon_1+r_2 E_{\beta}\big((a_0+a_1+\eta)(T-t_0)^{\beta}\big)\rho ,\notag\\
\end{eqnarray}
for every $t\in [t_0,T].$\\
Therefore, if (\ref{c1}) is satisfied then $\|x(t)\|\leq \varepsilon_2$, for all $t\in [t_0,T]$.
\end{proof}

\begin{req}
Note that, if we use Lemma \ref{t1}, we get
$$r_1\leq \frac{(a_0+a_1)}{\eta}$$
and
$$r_2\leq \frac{a_2}{\eta}.$$
Thus
$$C(\varepsilon_1,\rho)\leq \Big(  \frac{(a_0+a_1)}{\eta}  E_{\beta}\big((a_0+a_1+\eta)(T-t_0)^{\beta}\big)+1 \Big)\varepsilon_1 + \frac{a_2}{\eta}E_{\beta}\big((a_0+a_1+\eta)(T-t_0)^{\beta}\big) \rho. $$
Then, the assumption (\ref{c1}) can be relaxed by:
\begin{eqnarray}
D(\varepsilon_1,\rho)=\Big(  \frac{(a_0+a_1)}{\eta}  E_{\beta}\big((a_0+a_1+\eta)(T-t_0)^{\beta}\big)+1 \Big)\varepsilon_1+\frac{a_2}{\eta}E_{\beta}\big((a_0+a_1+\eta)(T-t_0)^{\beta}\big) \rho\leq \varepsilon_2.
\end{eqnarray}
\end{req}

\begin{req}
For the integer-order case, the main result remains the same by changing $\beta$ by $1$ and the MLF by the exponential function.
\end{req}
\section{Illustrative examples}
Two illustrative examples are considered to show the usefulness and interest of the main result.
\begin{exe}
Consider the FOS \eqref{e2}, with $\rho=10^{-1}$, $\beta=0.9$, $t_0=0$, $$g(s)=0.4\cos^2(s) \sin^2(s),$$  $$\nu(r)=\big(0, 0.09\big)^T,\  \text{for} \ r\in [-0.1,0],$$ $$f(s,x(s),x(s-g(s)),d(s))= \Big( \sin\big(0.01 x_2(s)\big), \sin\big(0.01 x_1(s-g(s))\big)  \Big)^T,$$
and
$$A_0=\left(
                          \begin{array}{cc}
                            0 & -2 \\
                            1 & 0 \\
                          \end{array}
                        \right),\  A_1=\left(
                          \begin{array}{cc}
                            0 & 3 \\
                            0 & 4 \\
                          \end{array}
                        \right),\  A_2=\left(
                          \begin{array}{cc}
                            0 & -0.8 \\
                            1 & 0 \\
                          \end{array}
                        \right).$$
We get $a_0=2.01$, $a_1=5.01$ and $a_2=1.01$.\\
For $\eta=1$, $\varepsilon_1=10^{-1}$, $\varepsilon_2=50$ and $T=0.385$, we get $D(\varepsilon_1,\rho)\simeq 49<\varepsilon_2.$\\
Then the FOS is robustly FTS w.r.t $\big(10^{-1},50,10^{-1},0.385\big)$.
\end{exe}

\begin{exe}
Consider the FOS \eqref{e2}, where $\rho=10^{-1}$, $t_0=0$, $\beta=0.6$, $$g(s)=0.4\cos^2(s) \sin^2(s),$$  $$\nu(r)=\big(0.06, 0.07\big)^T,\  \text{for} \ r\in [-0.1,0],$$ $$f(s,x(s),x(s-g(s)),d(s))= \Big( \sin\big(0.01 x_2(s-g(s))\big), \sin\big(0.01 x_1(s)\big)  \Big)^T,$$
and
$$A_0=\left(
                          \begin{array}{cc}
                            0 & -1 \\
                            2 & 0 \\
                          \end{array}
                        \right),\  A_1=\left(
                          \begin{array}{cc}
                            0.5 & 0 \\
                            0 & 1 \\
                          \end{array}
                        \right),\  A_2=\left(
                          \begin{array}{cc}
                            0 & 0.4 \\
                            -1 & 0 \\
                          \end{array}
                        \right).$$
We get $a_0=2.01$, $a_1=1.01$ and $a_2=1.01$.\\
For $\eta=1$, $\varepsilon_1=10^{-1}$, $\varepsilon_2=100$ and $T=0.49$, we get $D(\varepsilon_1,\rho)\simeq 97<\varepsilon_2.$\\
Then the FOS is robustly FTS w.r.t $\big(10^{-1},100,10^{-1},0.49\big)$.
\end{exe}

\section{Conclusion}
In this work, the robust FTS of FOTDSs with disturbances was studied. By proposing an approach based on the fixed point theory we have obtained a new sufficient condition for the robust FTS of such systems. Finally, two illustrative examples were presented to prove the validity of our result.

\end{document}